\begin{document}

\preprint{APS/123-QED}

\title{Local indistinguishability of orthogonal product states}

\author{Zhi-Chao Zhang$^{1}$}
\author{Fei Gao$^{1}$}
 \email{gaofei\_bupt@hotmail.com}
\author{Ya Cao$^{1}$}%
\author{Su-Juan Qin$^{1}$}%
\author{Qiao-Yan Wen$^{1}$}

\affiliation{%
 $^{1}$State Key Laboratory of Networking and Switching Technology, Beijing University of Posts and Telecommunications, Beijing, 100876, China
}%

\date{\today}

\begin{abstract}
In the general bipartite quantum system $m \otimes n$, Wang \emph{et al.} [Y.-L Wang \emph{et al.}, Phys. Rev. A \textbf{92}, 032313 (2015)] presented $3(m+n)-9$ orthogonal product states which cannot be distinguished by local operations and classical communication (LOCC). In this paper, we aim to construct less locally indistinguishable orthogonal product states in $m\otimes n $. First, in $3\otimes n (3< n)$ quantum system, we construct $3n-2$ locally indistinguishable orthogonal product states which are not unextendible product bases. Then, for $m\otimes n (4\leq m\leq n)$, we present $3n+m-4$ orthogonal product states which cannot be perfectly distinguished by LOCC. Finally, in the general bipartite quantum system $m\otimes n(3\leq m\leq n)$,  we show a smaller set with $2n-1$ orthogonal product states and prove that these states are LOCC indistinguishable using a very simple but quite effective method. All of the above results demonstrate the phenomenon of nonlocality without entanglement.

\begin{description}
\item[PACS numbers]
03.67.Hk, 03.65.Ud
\end{description}
\end{abstract}

\pacs{Valid PACS appear here}
\maketitle


\section{\label{sec:level1}Introduction\protect}
In quantum information theory, one of the main goals is to understand the power and limitation of quantum operations which can be implemented by local operations and classical communication (LOCC). When global operators cannot be implemented by LOCC, it reflects the fundamental feature of quantum mechanics which is called nonlocality and has received
wide attention in recent years [1-5]. Especially for the phenomenon of nonlocality without entanglement, many interesting results have been presented [6-14].

In general, it is well known that entanglement increases the difficulty of distinguishing orthogonal quantum states by LOCC [15-22]. However, many results reveal that entanglement is not necessary for LOCC indistinguishable quantum states. In 1999, Bennett \emph{et al.} presented nine LOCC indistinguishable pure product states in $3\otimes3$ and showed the phenomenon of nonlocality without entanglement, which is a fundamental result in this area [6]. Furthermore, for the nonlocality of the nine pure product states, Walgate \emph{et al.} put forward a very simple proof method [7]. Then, this result was generalized in $d\otimes d$ [23,24].
In [25], Childs \emph{et al.} proved that any LOCC measurement for discriminating irreducible domino-type tilings would err with certain probability.

Despite these huge advances, the nonlocality of orthogonal product states is still extensively studied. Recently, Wang \emph{et al.} presented $3(m+n)-9$ locally indistinguishable orthogonal product states in $m \otimes n$[26]. In general, in the problems of distinguishability with LOCC what matters is the minimum number of LOCC indistinguishable states. Naturally, for the general $m\otimes n(3\leq m\leq n)$ quantum system, it is interesting to ask whether there exist less locally indistinguishable orthogonal product states. Therefore, finding such states is still meaningful.

In this paper, we focus on constructing the locally indistinguishable orthogonal product states in the general bipartite quantum systems. Fortunately, we construct $3n-2$ orthogonal product states in $3\otimes n (3< n)$ and $3n+m-4$ orthogonal product states in $m\otimes n (4\leq m\leq n)$. Then, we prove these states cannot be perfectly distinguished by LOCC using a very simple but quite effective method. And we show that these states are not unextendible product bases [12] but other classes of locally indistinguishable orthogonal product states. To better understand the phenomenon of nonlocality
without entanglement, in $m\otimes n (3\leq m\leq n)$, we present a smaller set with $2n-1$ locally indistinguishable orthogonal product states. And most of these states are extendible. In addition, the two classes of orthogonal product states of our construction are less than the states in [26].

The rest of this paper is organized as follows. In Sec. \uppercase\expandafter{\romannumeral2}, we present $3n-2$ locally indistinguishable orthogonal product states in $3\otimes n (3< n)$ and $3n+m-4$ locally indistinguishable orthogonal product states in $m\otimes n (4\leq m\leq n)$. In Sec. \uppercase\expandafter{\romannumeral3}, we construct $2n-1$ orthogonal product states in $m\otimes n (3\leq m\leq n)$, which cannot be perfectly distinguished by LOCC. Finally, in Sec. \uppercase\expandafter{\romannumeral4}, we draw the conclusion.

\theoremstyle{remark}
\newtheorem{definition}{\indent Definition}
\newtheorem{lemma}{\indent Lemma}
\newtheorem{theorem}{\indent Theorem}
\newtheorem{corollary}{\indent Corollary}

\def\QEDclosed{\mbox{\rule[0pt]{1.3ex}{1.3ex}}}
\def\QED{\QEDclosed}
\def\proof{\indent{\em Proof}.}
\def\endproof{\hspace*{\fill}~\QED\par\endtrivlist\unskip}

\section{Local Indistinguishability of Orthogonal Product States}

In this section, we will construct the orthogonal product states in the general bipartite quantum system and prove that these states cannot be perfectly distinguished by LOCC.

When $m=n=3$, there are 8 locally indistinguishable orthogonal product states [24]. And when we add $|\phi_{9}\rangle=|0\rangle_{A}|0\rangle_{B}$ in the set of states, these quantum states are in fact the nine states which were presented by Bennett \emph{et al.} in [6].

In the following, we present the main results.

Firstly, in $3\otimes n (3< n)$, we construct $3n-2$ orthogonal product states as follows:

\begin{eqnarray}
\label{eq.2}
\begin{split}
&|e\pm f\rangle=\frac{1}{\sqrt{2}}(|e\rangle \pm |f\rangle),0\leq e<f\leq n-1,\\
&|\phi_{i}\rangle=|i\rangle_{A}|0-i\rangle_{B}, i=1,2,\\ &|\phi_{i+2}\rangle=|0-i\rangle_{A}|j\rangle_{B},i=1,j=2;i=2,j=1,\\
&|\phi_{i+4}\rangle=|i\rangle_{A}|0+i\rangle_{B}, i=1,2,\\ &|\phi_{7}\rangle=|0+2\rangle_{A}|1\rangle_{B},\\
&|\phi_{j+5}\rangle=|0-1\rangle_{A}|j\rangle_{B},j=3,\ldots,n-1,\\
&|\phi_{i+2(r-1)+n+4}\rangle=|j\rangle_{A}|(r+i)-(n-r)\rangle_{B},\\
&i=1,|j\rangle=|0+1\rangle;i=2,j=2,r=1,\ldots,a-1,\\
&|\phi_{i+2(a-1)+n+4}\rangle=|j\rangle_{A}|(a+i)-(n-a)\rangle_{B},\\
&i=b=1,|j\rangle=|0+1\rangle,a \geq 2,\\
&|\phi_{i+2(r-1)+2n+1}\rangle=|j\rangle_{A}|(r+i)+(n-r)\rangle_{B},\\
&i=1,|j\rangle=|0+1\rangle;i=2,j=2,r=1,\ldots,a-1,\\
&|\phi_{i+2(a-1)+2n+1}\rangle=|j\rangle_{A}|(a+i)+(n-a)\rangle_{B},\\
&i=b=1,|j\rangle=|0+1\rangle,a \geq 2.\\
\end{split}
\end{eqnarray}

where $n=2a+b+1, a\geq1, 0\leq b<2$.

In the following, we prove that these quantum states (1) are LOCC indistinguishable.

\begin{theorem}
In $3\otimes n (3< n)$,  the above $3n-2$ states
cannot be perfectly
distinguished by LOCC.
\end{theorem}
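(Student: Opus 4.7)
The plan is to apply the Walgate--Hardy style argument (the framework also used in the cited work of Wang \emph{et al.}): to prove LOCC indistinguishability, it suffices to show that the first nontrivial round of any hypothetical LOCC protocol must be a measurement proportional to the identity; iterating, no party can ever extract information, contradicting successful discrimination.

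First I would assume, for contradiction, that the $3n-2$ states admit a perfect LOCC discrimination protocol. Without loss of generality, Bob makes the opening measurement, given by POVM elements $\{M_k = E_k^{\dagger}E_k\}$. For the post-measurement states to remain mutually orthogonal (a prerequisite for perfect distinguishability), we need
\[
\langle \phi_i | I_A \otimes M_k | \phi_j\rangle = 0 \qquad \text{for every } k \text{ and every } i\neq j.
\]
Writing $M_k = (m_{ab})_{a,b=0}^{n-1}$ in Bob's computational basis, the strategy is then to extract, from the pairs of states $|\phi_i\rangle,|\phi_j\rangle$ that share the same Alice component, linear constraints on the entries $m_{ab}$ and show that together these constraints force $M_k$ to be a scalar multiple of $I_B$.

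The second step carries this out explicitly. The pair $|\phi_1\rangle = |1\rangle_A|0-1\rangle_B$ versus $|\phi_5\rangle=|1\rangle_A|0+1\rangle_B$ immediately gives $m_{00}=m_{11}$ (and forces the real part of $m_{01}$ into a specific relation); the pair $|\phi_2\rangle,|\phi_6\rangle$ gives $m_{00}=m_{22}$. The single-state $|\phi_7\rangle$ and the tail family $|\phi_{j+5}\rangle$ for $j=3,\dots,n-1$, all sharing an Alice part of the form $|0\pm 1\rangle_A$ or $|0\pm 2\rangle_A$ with earlier states, propagate the diagonal equality $m_{00}=m_{11}=\cdots=m_{n-1,n-1}$. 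The remaining families, indexed by $r=1,\dots,a-1$ and by the ``tail'' case with $a\geq 2$, contribute pairs of the form $|j\rangle_A|(r+i)\pm(n-r)\rangle_B$ against earlier $|j\rangle_A$-states (and against each other), yielding vanishing conditions $m_{ab}=0$ for all the off-diagonal positions $a\neq b$ in $B$. Combining, $M_k \propto I_B$ for every $k$, so Bob's first measurement is trivial and extracts no information. By the mirror argument --- many pairs of states in (1) also share a common Bob component, so Alice's analogous measurement $N_k$ on $\mathbb{C}^3$ is likewise constrained by $\langle \phi_i|N_k\otimes I_B|\phi_j\rangle=0$ --- Alice's opening measurement is also scalar, and the protocol cannot begin. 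This contradicts distinguishability.

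The main obstacle will be the bookkeeping in the second step. The state list (1) is parametrized by the decomposition $n=2a+b+1$ and splits into several subfamilies, with the block structure changing depending on whether $a=1$, $a\geq 2$, and on $b\in\{0,1\}$. Enumerating precisely which pairs of states kill which entry of $M_k$, and checking that for \emph{every} off-diagonal position $(a,b)$ some pair in the list forces $m_{ab}=0$ (and that no diagonal element is left unconstrained), requires a careful case-by-case verification; this is essentially the entire technical content of the proof, since once $M_k$ is pinned down to a scalar the abstract LOCC conclusion follows from the standard framework.
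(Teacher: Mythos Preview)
Your overall strategy---show that any orthogonality-preserving POVM element on either side must be a scalar multiple of the identity---is exactly the paper's approach, and the framework you describe is correct.

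However, in your sketch of Bob's side you have swapped which families of states do which job. The states $|\phi_7\rangle=|0+2\rangle_A|1\rangle_B$ and $|\phi_{j+5}\rangle=|0-1\rangle_A|j\rangle_B$ have \emph{computational-basis} Bob parts, so pairing them with states sharing (or overlapping) the same Alice factor yields conditions of the form $\langle j|M_k|j'\rangle=0$, i.e.\ off-diagonal vanishing $m_{jj'}=0$, not diagonal equalities. Conversely, the diagonal equalities $m_{r+i,\,r+i}=m_{n-r,\,n-r}$ for the higher indices come precisely from the $r$-indexed families: the pair $|j\rangle_A|(r+i)-(n-r)\rangle_B$ versus $|j\rangle_A|(r+i)+(n-r)\rangle_B$ shares the Alice factor and forces $\langle(r+i)-(n-r)|M_k|(r+i)+(n-r)\rangle=0$, which is what extends $m_{00}=m_{11}=m_{22}$ to all of $m_{00}=\cdots=m_{n-1,n-1}$. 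The $r$-families also contribute off-diagonal zeros (when paired against $|\phi_1\rangle,|\phi_2\rangle$), so they do double duty. This is exactly the bookkeeping you flagged as the main obstacle; once you reallocate the constraints correctly, the argument goes through as in the paper.
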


\begin{proof}
To distinguish the states, some party has to start with a \emph{nontrivial} and \emph{non-disturbing} measurement, i.e., not all measurements $M_{m}^{\dagger}M_{m}$ are proportional to the identity and have the orthogonality relations preserved afterwards, making further discrimination possible [27].

First, Alice performs a general measurement which is represented by
a set of general $3\times 3$ positive
operator-valued measure (POVM) elements $M_{3}^{\dagger}M_{3}$. In the $\{|0\rangle,|1\rangle,|2\rangle\}_{A}$ basis which corresponds to the states (1), we
write the POVM elements

$M_{3}^{\dagger}M_{3}=\left[
  \begin{array}{ccccc}
    a_{00} & a_{01} &   a_{02} \\
    a_{10} & a_{11} &   a_{12} \\
    a_{20} & a_{21} &   a_{22} \\
  \end{array}
\right]
$.

The post measurement states $\{M_{3}\otimes I_{B}|\phi_{j}\rangle, j=1,\ldots,3n-2\}$ should also be mutually orthogonal. Considering the states $|\phi_{1,2}\rangle$, we know $\langle 1|M_{3}^{\dagger}M_{3}|2\rangle\langle 0-1|0-2\rangle=0$. Thus, $\langle 1|M_{3}^{\dagger}M_{3}|2\rangle=0, i.e., a_{12}=a_{21}=0 $.

For the states $|\phi_{j}\rangle$ and $|\phi_{i+2}\rangle, i=1$, $j=2$, and $i=2, j=1$, we have $\langle j|M_{3}^{\dagger}M_{3}|0-i\rangle\langle 0-j|j\rangle=0$. Then, $\langle j|M_{3}^{\dagger}M_{3}|0-i\rangle=\langle j|M_{3}^{\dagger}M_{3}|0\rangle=0, i.e., a_{0j}=a_{j0}=0$, $ j=1,2$.

Lastly, considering the states $|\phi_{i+2}\rangle, i=1,2$, $|\phi_{7}\rangle$ and $|\phi_{n+5}\rangle$, we know $\langle0+i|M_{3}^{\dagger}M_{3}|0-i\rangle\langle j|j\rangle=0, i.e., \langle 0|M_{3}^{\dagger}M_{3}|0\rangle-\langle i|M_{3}^{\dagger}M_{3}|i\rangle=0, i=1,2$. Then, $\langle 0|M_{3}^{\dagger}M_{3}|0\rangle=\langle i|M_{3}^{\dagger}M_{3}|i\rangle, i=1,2$. Thus, $a_{00}=a_{11}=a_{22}$.

Therefore, all of Alice's measurements $M_{3}^{\dagger}M_{3}$
are proportional to the identity, meaning that Alice cannot start with a nontrivial measurement.

When Bob has to start with a nontrivial and non-disturbing measurement $M_{n}^{\dagger}M_{n}$, we
write the POVM elements in the $\{|0\rangle,|1\rangle,\ldots,|n-2\rangle,|n-1\rangle\}_{A}$ basis which corresponds to the states (1),

$M_{n}^{\dagger}M_{n}=\left[
  \begin{array}{ccccc}
    a_{00} & a_{01} &  \cdots & a_{0n-1} \\
    a_{10} & a_{11} &  \cdots & a_{1n-1} \\
    \vdots & \vdots &  \ddots & \vdots \\

    a_{n-10} & a_{n-11} &  \cdots & a_{n-1n-1} \\
  \end{array}
\right]
$.

Then, we can also get that the post measurement states $\{I_{A}\otimes M_{n}|\phi_{i}\rangle, i=1,\ldots,3n-2\}$ should be mutually orthogonal. In the same way, considering the states $|\phi_{i+2}\rangle, i=1,2$ and $|\phi_{j+5}\rangle, j=3,\ldots,n-1$, we get $\langle i|M_{n}^{\dagger}M_{n}|j\rangle=0, i.e., a_{ij}=0, i,j=1,\ldots,n-1, i\neq j$.

For the states $|\phi_{i}\rangle$ and $|\phi_{i+2}\rangle$, $i=1,2$, we have $\langle i|0-i\rangle\langle 0-i|M_{n}^{\dagger}M_{n}|j\rangle=0$. Then, $\langle 0-i|M_{n}^{\dagger}M_{n}|j\rangle=\langle 0|M_{n}^{\dagger}M_{n}|j\rangle=0, i.e., a_{0j}=a_{j0}=0, j=1,2$.

For the states $|\phi_{1,2}\rangle$ and $|\phi_{i+2(r-1)+n+4}\rangle$, where $i=1,2, r=1,\ldots,a-1$, and $i=b=1, r=a$, we have $\langle 0-i|M_{n}^{\dagger}M_{n}|(r+i)-(n-r)\rangle=0$. Then, $\langle 0|M_{n}^{\dagger}M_{n}|(n-r)\rangle=\langle 0|M_{n}^{\dagger}M_{n}|(r+i)\rangle=\langle 0|M_{n}^{\dagger}M_{n}|2\rangle=0$, $i.e., a_{0n-r}=a_{n-r0}=a_{0r+i}=a_{r+i0}=a_{20}=0$, where $i=1,\ldots,m-1, r=1,\ldots,a-1$, and $i=1,\ldots,b, r=a$.

Therefore, we know $a_{0i}=a_{i0}=0, i=1,\ldots,n-1$.

For the states $|\phi_{i}\rangle$ and $|\phi_{i+4}\rangle$, $i=1,2$, we know $\langle i|i\rangle\langle 0+i|M_{n}^{\dagger}M_{n}|0-i\rangle=0, i.e., \langle 0|M_{n}^{\dagger}M_{n}|0\rangle=\langle i|M_{n}^{\dagger}M_{n}|i\rangle$, $i=1,2$. In the same way, for the states $|\phi_{i+2(r-1)+n+4}\rangle$ and $|\phi_{i+2(r-1)+2n+1}\rangle$, where $i=1,2, r=1,\ldots,a-1$, and $i=1=b, r=a$, we can also get $\langle i|M_{n}^{\dagger}M_{n}|i\rangle=\langle 2|M_{n}^{\dagger}M_{n}|2\rangle$, $i=3,\ldots,n-1$. Thus, $a_{00}=a_{11}=\cdots=a_{n-1n-1}$. That is, all of Bob's measurements $M_{n}^{\dagger}M_{n}$
are proportional to the identity.
Thus, Bob cannot start with a nontrivial measurement either. Therefore,
the $3n-2$ states
cannot be perfectly
distinguished by LOCC. This completes
the proof.
\end{proof}

In addition, the states of our construction are less than the states in [26] and are not unextendible product bases [12], either. For example, the following product state $|\phi_{3n+m-3}\rangle=|0\rangle_{A}|0\rangle_{B}$ is orthogonal to the states in (1).

In the higher-dimensional general bipartite quantum system $m\otimes n (4\leq m\leq n)$, we present the following $3n+m-4$ orthogonal product states:

\begin{eqnarray}
\label{eq.1}
\begin{split}
&|e\pm f\rangle=\frac{1}{\sqrt{2}}(|e\rangle \pm |f\rangle),0\leq e<f\leq n-1,\\
&|\phi_{i}\rangle=|i\rangle_{A}|0-i\rangle_{B}, i=1,\ldots,m-1,\\ &|\phi_{i+m-1}\rangle=|0-i\rangle_{A}|j\rangle_{B},\\
&i=1,\ldots,m-2,j=i+1; i=m-1,j=1,\\
&|\phi_{i+2m-2}\rangle=|i\rangle_{A}|0+i\rangle_{B},i=1,\ldots,m-1,\\
&|\phi_{i+3m-3}\rangle=|0+i\rangle_{A}|j\rangle_{B},\\
&i=1,\ldots,m-2,j=i+1;i=m-1,j=1,\\
&|\phi_{j+3m-3}\rangle=|0\rangle_{A}|j\rangle_{B},j=m,\ldots,n-1,\\
&|\phi_{i+(r-1)(m-1)+n+3m-4}\rangle=|i\rangle_{A}|[r(m-2)+i]-(n-r)\rangle_{B},\\
&i=1,\ldots,m-1,r=1,\ldots,a-1,\\
&|\phi_{i+(a-1)(m-1)+n+3m-4}\rangle=|i\rangle_{A}|[a(m-2)+i]-(n-a)\rangle_{B},\\
&i=1,\ldots,b,\\
&|\phi_{i+(r-1)(m-1)+2n+2m-4}\rangle=|i\rangle_{A}|[r(m-2)+i]+(n-r)\rangle_{B},\\
&i=1,\ldots,m-1,r=1,\ldots,a-1,\\
&|\phi_{i+(a-1)(m-1)+2n+2m-4}\rangle=|i\rangle_{A}|[a(m-2)+i]+(n-a)\rangle_{B},\\
&i=1,\ldots,b.\\
\end{split}
\end{eqnarray}

where $n=a(m-1)+b+1, a\geq1, 0\leq b< m-1.$

In the following, we study the local indistinguishability of these quantum states (2).

\begin{theorem}
In $m\otimes n (4\leq m\leq n)$,  the above $3n+m-4$ states
cannot be perfectly
distinguished by LOCC.
\end{theorem}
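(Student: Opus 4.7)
The plan is to adapt the same Walgate-style argument that proved Theorem~1 to the higher-dimensional setting, showing that any party who tries to start with a nontrivial orthogonality-preserving POVM is forced to have a matrix proportional to the identity. Let Alice's candidate POVM element be an $m\times m$ matrix $M_m^{\dagger}M_m=(a_{ij})$ in the basis $\{|0\rangle,\ldots,|m-1\rangle\}_A$, and let Bob's candidate POVM element be an $n\times n$ matrix $M_n^{\dagger}M_n=(a_{ij})$ in $\{|0\rangle,\ldots,|n-1\rangle\}_B$. After the measurement the post-measurement states must remain mutually orthogonal, and each such constraint turns into a single linear equation on the entries of the POVM matrix.

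For Alice, I would first pick off the off-diagonal entries $a_{ij}$ with $i,j\in\{1,\ldots,m-1\}$, $i\neq j$, by using pairs $|\phi_i\rangle=|i\rangle_A|0-i\rangle_B$ and $|\phi_j\rangle=|j\rangle_A|0-j\rangle_B$: since $\langle 0-i|0-j\rangle=\tfrac12\neq 0$, orthogonality forces $a_{ij}=0$. Then, using $|\phi_i\rangle$ paired against $|\phi_{i+m-1}\rangle=|0-i\rangle_A|j\rangle_B$ (with the explicit $j$ given in~(2)), the Bob factors are orthogonal, so the Alice factor must give $\langle i|M_m^{\dagger}M_m|0-i\rangle=0$, which, combined with the previously-obtained $a_{ij}=0$, yields $a_{0i}=a_{i0}=0$ for $i=1,\ldots,m-1$. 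Finally, pairing $|\phi_i\rangle$ with $|\phi_{i+2m-2}\rangle=|i\rangle_A|0+i\rangle_B$ gives $a_{00}=a_{ii}$ for each $i$, so $M_m^{\dagger}M_m\propto I$.

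For Bob the argument is analogous, but now his space has dimension $n\geq m$, so the states $|\phi_{j+3m-3}\rangle=|0\rangle_A|j\rangle_B$ for $j=m,\ldots,n-1$ and, crucially, the stopper-like states $|i\rangle_A|[r(m-2)+i]\pm(n-r)\rangle_B$ are what propagate the constraints across Bob's full basis. I would again first kill the off-diagonal entries $a_{ij}$ with $i,j\in\{1,\ldots,n-1\}$, $i\neq j$, from pairs among $|\phi_{i+m-1}\rangle$, $|\phi_{i+3m-3}\rangle$, $|\phi_{j+3m-3}\rangle$ and the stopper states (each pair has orthogonal Alice factors whose product is $\tfrac12$, so the Bob orthogonality condition becomes $a_{ij}=0$). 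Next, using $|\phi_i\rangle$ against $|\phi_{i+m-1}\rangle$, and against the stoppers $|i\rangle_A|[r(m-2)+i]-(n-r)\rangle_B$, I can deduce $a_{0j}=a_{j0}=0$ for every $j=1,\ldots,n-1$, where the stoppers are exactly what reaches the indices $j\geq m$. Finally, matching $|\phi_{i+(r-1)(m-1)+n+3m-4}\rangle$ with $|\phi_{i+(r-1)(m-1)+2n+2m-4}\rangle$ (the $\pm$ partners) produces equations of the form $a_{[r(m-2)+i][r(m-2)+i]}=a_{(n-r)(n-r)}$, and combining these with $a_{00}=a_{ii}$ for $i=1,\ldots,m-1$ (from $|\phi_i\rangle$ vs.\ $|\phi_{i+2m-2}\rangle$) chains all diagonal entries together, giving $a_{00}=a_{11}=\cdots=a_{n-1\,n-1}$.

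The routine part is the first two steps, which are entry-by-entry linear equations just as in Theorem~1. The main obstacle, and the reason the index arithmetic in~(2) is what it is, is the last step for Bob: the stopper indices $r(m-2)+i$ and $n-r$, as $r$ ranges over $1,\ldots,a-1$ (and $i$ over $1,\ldots,m-1$, plus the leftover $b$ at $r=a$), must hit every index in $\{m,\ldots,n-1\}$ and link each of them, via the $\pm$ pairs, back to the diagonal equalities already established for $\{0,1,\ldots,m-1\}$. I would verify explicitly that the decomposition $n=a(m-1)+b+1$ is exactly what makes this chain cover every diagonal entry, so that the resulting equivalence relation on the diagonal indices is the full one and $M_n^{\dagger}M_n\propto I$. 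Once both POVMs must be trivial, neither party can start, and LOCC indistinguishability follows.
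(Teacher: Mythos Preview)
Your overall strategy---showing that any orthogonality-preserving POVM element on either side must be proportional to the identity---is exactly the paper's approach, and your analysis on Bob's side is essentially correct (modulo the self-contradictory phrase ``orthogonal Alice factors whose product is $\tfrac12$''; you mean \emph{non}-orthogonal with inner product $\tfrac12$). However, your Alice argument is broken in steps~2 and~3, and the error is the same in both places: to extract a constraint on $M_m^\dagger M_m$ from a pair $|\psi\rangle,|\phi\rangle$, you need the Bob factors to have \emph{nonzero} inner product, since the orthogonality condition factors as $\langle\psi_A|M_m^\dagger M_m|\phi_A\rangle\cdot\langle\psi_B|\phi_B\rangle=0$. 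When the Bob factors are orthogonal the equation is vacuous and tells you nothing about Alice's POVM.

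Concretely: in your step~2 you pair $|\phi_i\rangle=|i\rangle_A|0{-}i\rangle_B$ with $|\phi_{i+m-1}\rangle=|0{-}i\rangle_A|j\rangle_B$, but $\langle 0{-}i|j\rangle_B=0$ (since $j=i{+}1\notin\{0,i\}$), so no constraint results. The pairing that works, as in the paper's Theorem~1, is $|\phi_j\rangle=|j\rangle_A|0{-}j\rangle_B$ against $|\phi_{i+m-1}\rangle=|0{-}i\rangle_A|j\rangle_B$: now the Bob inner product is $\langle 0{-}j|j\rangle=-\tfrac{1}{\sqrt2}\neq0$, forcing $\langle j|M_m^\dagger M_m|0{-}i\rangle=0$ and hence $a_{j0}=0$. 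Likewise, in your step~3 the pair $|\phi_i\rangle,|\phi_{i+2m-2}\rangle$ shares the Alice factor $|i\rangle_A$ and has orthogonal Bob factors $|0{\mp}i\rangle_B$, so again nothing is learned; the pair that actually yields $a_{00}=a_{ii}$ is $|\phi_{i+m-1}\rangle=|0{-}i\rangle_A|j\rangle_B$ versus $|\phi_{i+3m-3}\rangle=|0{+}i\rangle_A|j\rangle_B$, whose Bob factors coincide. With these two corrections your Alice argument goes through, and the remainder of your proposal (including the index-chasing for Bob's diagonal via the $\pm$ stopper pairs and the decomposition $n=a(m-1)+b+1$) is precisely what the paper does.
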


\begin{proof}
Similarly to the proof of Theorem 1, we first need to prove that Bob cannot start with a nontrivial measurement.

When Bob has to start with a nontrivial and non-disturbing measurement $M_{n}^{\dagger}M_{n}$, we
write the POVM elements in the $\{|0\rangle,|1\rangle,\ldots,|n-2\rangle,|n-1\rangle\}_{A}$ basis which corresponds to the states (1),

$M_{n}^{\dagger}M_{n}=\left[
  \begin{array}{ccccc}
    a_{00} & a_{01} &  \cdots & a_{0n-1} \\
    a_{10} & a_{11} &  \cdots & a_{1n-1} \\
    \vdots & \vdots &  \ddots & \vdots \\

    a_{n-10} & a_{n-11} &  \cdots & a_{n-1n-1} \\
  \end{array}
\right]
$.

Then, we can also get that the post measurement states $\{I_{A}\otimes M_{n}|\phi_{i}\rangle, i=1,\ldots,3n+m-4\}$ should be mutually orthogonal. In the same way, considering the states $|\phi_{i+m-1}\rangle, i=1,\ldots,m-2,j=i+1; i=m-1,j=1$ and $|\phi_{j+3m-3}\rangle, j=m,\ldots,n-1$, we get $\langle i|M_{n}^{\dagger}M_{n}|j\rangle=0, i.e., a_{ij}=0, i,j=1,\ldots,n-1, i\neq j$.

For the states $|\phi_{i}\rangle$ and $|\phi_{i+m-1}\rangle$, $i=1,\ldots,m-1$, we have $\langle i|0-i\rangle\langle 0-i|M_{n}^{\dagger}M_{n}|j\rangle=0$. Then, $\langle 0-i|M_{n}^{\dagger}M_{n}|j\rangle=\langle 0|M_{n}^{\dagger}M_{n}|j\rangle=0, i.e., a_{0j}=a_{j0}=0, j=1,\ldots,m-1$.

For the states $|\phi_{i}\rangle$ and $|\phi_{i+(r-1)(m-1)+n+3m-4}\rangle$, where $i=1,\ldots,m-1, r=1,\ldots,a-1$, and $i=1,\ldots,b, r=a$, we have $\langle 0-i|M_{n}^{\dagger}M_{n}|[r(m-2)+i]-(n-r)\rangle=0$. Then, $\langle 0|M_{n}^{\dagger}M_{n}|(n-r)\rangle=\langle 0|M_{n}^{\dagger}M_{n}|[r(m-2)+i]\rangle=\langle 0|M_{n}^{\dagger}M_{n}|m-1\rangle=0$, $i.e., a_{0n-r}=a_{n-r0}=a_{0r(m-2)+i}=a_{r(m-2)+i0}=a_{m-10}=0$, where $i=1,\ldots,m-1, r=1,\ldots,a-1$, and $i=1,\ldots,b, r=a$.

Therefore, we know $a_{0i}=a_{i0}=0, i=1,\ldots,n-1$.

For the states $|\phi_{i}\rangle$ and $|\phi_{i+2m-2}\rangle$, $i=1,\ldots,m-1$, we know $\langle i|i\rangle\langle 0+i|M_{n}^{\dagger}M_{n}|0-i\rangle=0, i.e., \langle 0|M_{n}^{\dagger}M_{n}|0\rangle=\langle i|M_{n}^{\dagger}M_{n}|i\rangle$, $i=1,\ldots,m-1$. In the same way, for the states $|\phi_{i+(r-1)(m-1)+n+3m-4}\rangle$ and $|\phi_{i+(r-1)(m-1)+2n+2m-4}\rangle$, where $i=1,\ldots,m-1, r=1,\ldots,a-1$, and $i=1,\ldots,b, r=a$, we can also get $\langle i|M_{n}^{\dagger}M_{n}|i\rangle=\langle m-1|M_{n}^{\dagger}M_{n}|m-1\rangle$, $i=m,\ldots,n-1$. Thus, $a_{00}=a_{11}=\cdots=a_{n-1n-1}$. That is, all of Bob's measurements $M_{n}^{\dagger}M_{n}$
are proportional to the identity.
Thus, Bob cannot start with a nontrivial measurement.

When Alice has to start with the nontrivial and non-disturbing measurements $M_{m}^{\dagger}M_{m}$, we can also get that the post measurement states $\{M_{m}\otimes I_{B}|\phi_{i}\rangle, i=1,\ldots,3n+m-4\}$ should be mutually orthogonal. In the same way, all of Alice's measurements $M_{m}^{\dagger}M_{m}$
are also proportional to the identity.
That is, Alice cannot start with a nontrivial measurement either.
Therefore,
the $3n+m-4$ states
cannot be perfectly
distinguished by LOCC. This completes
the proof.
\end{proof}

In fact, when $m=n$, the states (2) can be extended to an orthogonal product base by adding the following $m^{2}-4m+4$ product states $\{|\phi_{00}\rangle=|0\rangle_{A}|0\rangle_{B}$,$|\phi_{ij}\rangle=|i\rangle_{A}|j\rangle_{B}, i=1,\ldots,m-2,j=1,\ldots,m-1,j\neq i, j\neq i+1$ and $i=m-1, j=2,\ldots,m-2\}$. In the following, we show the LOCC indistinguishable product states can be perfectly distinguished by separable measurements. First, we denote these states as $|\phi_{i}\rangle, i=1,\ldots, m^{2}$. Then, we define a measurement $\{M_{i}\}_{i=1}^{m^{2}}$, where $M_{i}=|\phi_{i}\rangle\langle\phi_{i}|$. Because $\{|\phi_{i}\rangle\}_{i=1}^{m^{2}}$ is an orthogonal product base in $m\otimes m$. Thus, $\sum_{i=1}^{m^{2}}M_{i}=\sum_{i=1}^{m^{2}}|\phi_{i}\rangle\langle\phi_{i}|=I$. As $|\phi_{i}\rangle$ is a product state, we can get $M_{i}$ is separable. Therefore, the set of states can be perfectly distinguished by separable measurement $\{M_{i}\}_{i=1}^{m^{2}}$. Naturally, when $m=n$, the states (2) can also be perfectly distinguished by separable measurement. This result also means that separable operations are strictly stronger than the local operations and classical communication.

\section{Less Locally Indistinguishable Orthogonal Product States}

In [27], Yu \emph{et al.} presented $2d-1$ LOCC indistinguishable orthogonal product states in $d\otimes d(d>2)$. In this section, we generalize the states to arbitrary bipartite quantum systems $m\otimes n$ and present a very simple but quite effective proof.

\begin{theorem}
In $m\otimes n(3\leq m\leq n)$, there are $2n-1$ LOCC indistinguishable orthogonal product states.
\end{theorem}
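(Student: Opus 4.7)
The plan is to first write down an explicit list of $2n-1$ orthogonal product states in $m\otimes n$ that generalizes the $2d-1$ construction of Yu \emph{et al.} from the $d\otimes d$ case, and then apply the same \emph{nontrivial and non-disturbing measurement} technique already used in Theorems 1 and 2. Since the text stresses that the proof is ``very simple but quite effective,'' I would keep the states sparse in structure: Alice mostly on computational basis vectors $|i\rangle_A$, Bob on $\pm$-superpositions of the form $|p-q\rangle_B$ and $|p+q\rangle_B$, plus one anchor state such as $|0\rangle_A|0\rangle_B$ to bring the count to $2n-1$.

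The design principle I would follow is that the set of orthogonality constraints among the post-measurement states should be just rich enough to force a starting POVM to be proportional to the identity. For each relevant $p\neq q$ in Bob's index set I would include a matched pair of the form $|i\rangle_A|p-q\rangle_B$ and $|i\rangle_A|p+q\rangle_B$ so that their orthogonality after a measurement on Bob delivers the diagonal equality $a_{pp}=a_{qq}$; and I would reuse Alice vectors across pairs so that states of the form $|\alpha\rangle_A|p\,\cdot\rangle_B$ and $|\alpha\rangle_A|q\,\cdot\rangle_B$ produce the off-diagonal vanishings $a_{pq}=0$. The mirror property on Alice's side, together with the anchor $|0\rangle_A|0\rangle_B$, should handle her smaller $m\times m$ POVM.

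Given such a construction, the proof itself parallels the earlier theorems almost verbatim. I would write Bob's candidate $M_n^\dagger M_n = (a_{jk})$ in the computational basis and expand
\[
\langle \phi_r |\, I_A\otimes M_n^\dagger M_n\, |\phi_s\rangle = 0
\]
for pairs engineered to kill each off-diagonal $a_{jk}$, then for pairs engineered to chain the diagonal into a single constant, concluding $M_n^\dagger M_n\propto I_n$. The same argument, with $M_m^\dagger M_m$ in place of $M_n^\dagger M_n$, gives $M_m^\dagger M_m\propto I_m$ for Alice. Since neither party can initiate a nontrivial non-disturbing measurement, LOCC discrimination is impossible by the Walgate--Hardy criterion invoked in the previous proofs.

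The main obstacle lies in the tight budget: with only $2n-1$ states the implied graph on POVM entries must still be connected, so that $a_{00}=a_{11}=\cdots=a_{n-1,n-1}$ closes as a single chain and every off-diagonal $a_{jk}$ is covered by at least one relation. This is most delicate when $m<n$, because Alice has too few indices to supply distinct support for many pairs, and Bob's $\pm$-pairs must tile all $n$ of his labels while Alice's POVM still receives enough constraints from the same small pool of states. Verifying that no spurious POVM element escapes the system of relations, uniformly for every admissible $m\le n$, is the crux of the argument; once it is checked, the indistinguishability conclusion is immediate.
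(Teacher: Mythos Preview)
Your overall strategy---write down an explicit family and then run the \emph{nontrivial non-disturbing measurement} argument---matches the paper. The gap is in the construction you sketch. With Alice's factors taken from the computational basis $|i\rangle_A$ and a single anchor $|0\rangle_A|0\rangle_B$, no orthogonality relation among the post-measurement states can ever tie together two diagonal entries of $M_m^\dagger M_m$: for any pair $|\phi_r\rangle=|i\rangle_A|\beta\rangle_B$ and $|\phi_s\rangle=|i'\rangle_A|\beta'\rangle_B$ one has $\langle\phi_r|\,M_m^\dagger M_m\otimes I\,|\phi_s\rangle=\langle i|M_m^\dagger M_m|i'\rangle\langle\beta|\beta'\rangle$, which for $i\neq i'$ touches only the off-diagonal $a_{ii'}$ and for $i=i'$ gives no constraint at all (the states were already orthogonal via Bob). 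Consequently any diagonal $M_m^\dagger M_m=\mathrm{diag}(c_0,\ldots,c_{m-1})$ with distinct $c_i$ is a nontrivial non-disturbing first move for Alice, and your set would be LOCC distinguishable. The anchor $|0\rangle_A|0\rangle_B$ does nothing to prevent this.

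The paper's construction (3) fixes exactly this point in two ways you are missing. First, it spends $m-1$ of the $2n-1$ states on Alice superpositions $|0-i\rangle_A|j\rangle_B$, which (paired with $|\phi_j\rangle=|j\rangle_A|0-j\rangle_B$) kill Alice's off-diagonals. Second---and this is the idea that makes the tight budget work---instead of matched $|p+q\rangle/|p-q\rangle$ pairs on Bob, it uses a single ``stopper'' state
\[
|\phi_{2n-1}\rangle=\tfrac{1}{\sqrt m}\bigl(|0\rangle+\cdots+|m-1\rangle\bigr)_A\otimes\tfrac{1}{\sqrt n}\bigl(|0\rangle+\cdots+|n-1\rangle\bigr)_B,
\]
whose full-support factors, once the off-diagonals are zero, convert each relation $\langle\phi_{2n-1}|M^\dagger M\otimes I|0-i\rangle_A|j\rangle_B=0$ into $a_{00}=a_{ii}$ on Alice's side, and similarly supply the missing diagonal links on Bob's side against the many ``minus''-only states. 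Your plan of buying diagonal equalities with $n-1$ explicit $\pm$ pairs already exhausts $2n-2$ states on Bob alone and leaves Alice unconstrained; the stopper state is precisely the device that lets one avoid those extra ``$+$'' states while simultaneously closing Alice's diagonal chain.
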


\begin{proof}
First, we construct $2n-1$ orthogonal product states as follows:

\begin{eqnarray}
\label{eq.2}
\begin{split}
&|e\pm f\rangle=\frac{1}{\sqrt{2}}(|e\rangle \pm |f\rangle),0\leq e<f\leq n-1,\\
&|\phi_{i}\rangle=|i\rangle_{A}|0-i\rangle_{B}, i=1,\ldots,m-1,\\ &|\phi_{i+m-1}\rangle=|0-i\rangle_{A}|j\rangle_{B},\\
&i=1,\ldots,m-2,j=i+1; i=m-1,j=1,\\
&|\phi_{j+m-1}\rangle=|0-1\rangle_{A}|j\rangle_{B},j=m,\ldots,n-1,\\
&|\phi_{n+m-1}\rangle=|0+1\rangle_{A}|2-(n-1)\rangle_{B}, m<n\\
&|\phi_{i+n+m-2}\rangle=|i\rangle_{A}|(m-2+i)-(n-1)\rangle_{B},\\
&i=2,\ldots,m-1,\\
&|\phi_{i+(r-1)(m-1)+n+m-2}\rangle=|j\rangle_{A}|[r(m-2)+i]-(n-r)\rangle_{B},\\
&i=1,|j\rangle=|0+1\rangle;i=2,\ldots,m-1,j=i,r=2,\ldots,a-1,\\
&|\phi_{i+(a-1)(m-1)+n+m-2}\rangle=|j\rangle_{A}|[a(m-2)+i]-(n-a)\rangle_{B},\\
&i=1,|j\rangle=|0+1\rangle;i=2,\ldots,b,j=i,a \geq 2,\\
&|\phi_{2n-1}\rangle=|0+1+\cdots+m-1\rangle_{A}|0+1+\cdots+n-1\rangle_{B}.\\
\end{split}
\end{eqnarray}

where $n=a(m-1)+b+1, a\geq1, 0\leq b<m-1$.

In the following, we will prove that these states (3) are locally indistinguishable. Similarly to the proof of Theorem 1, we first need to prove that Alice cannot start with a nontrivial measurement.

When Alice has to start with the nontrivial and nondisturbing  measurement $M_{m}^{\dagger}M_{m}$, we write measurement $M_{m}^{\dagger}M_{m}$ in the $\{|0\rangle,|1\rangle,\ldots,|m-2\rangle,|m-1\rangle\}_{A}$ basis which corresponds to the states (3):

$M_{m}^{\dagger}M_{m}=\left[
  \begin{array}{ccccc}
    a_{00} & a_{01} &  \cdots & a_{0m-1} \\
    a_{10} & a_{11} &  \cdots & a_{1m-1} \\
    \vdots & \vdots &  \ddots & \vdots \\

    a_{m-10} & a_{m-11} &  \cdots & a_{m-1m-1} \\
  \end{array}
\right]
$.

The post measurement states $\{M_{m}\otimes I_{B}|\phi_{j}\rangle, j=1,\ldots,2n-1\}$ should also be mutually orthogonal. Considering the states $|\phi_{i}\rangle, i=1,\ldots,m-1$, we know $\langle i|M_{m}^{\dagger}M_{m}|j\rangle\langle 0-i|0-j\rangle=0, i,j=1,\ldots,m-1, i\neq j$. Thus, $\langle i|M_{m}^{\dagger}M_{m}|j\rangle=0, i.e., a_{ij}=0, i,j=1,\ldots,m-1, i\neq j$.

For the states $|\phi_{j}\rangle$ and $|\phi_{i+m-1}\rangle, i=1,\ldots,m-2$, $j=i+1$, and only for $i=m-1, j=1$, we have $\langle j|M_{m}^{\dagger}M_{m}|0-i\rangle\langle 0-j|j\rangle=0$. Then, $\langle j|M_{m}^{\dagger}M_{m}|0-i\rangle=\langle j|M_{m}^{\dagger}M_{m}|0\rangle=0, i.e., a_{0j}=a_{j0}=0, j=1,\ldots,m-1$.

Lastly, considering the states $|\phi_{i+m-1}\rangle$ and $|\phi_{2n-1}\rangle$, $i=1,\ldots,m-2$, $j=i+1$, and only for $i=m-1, j=1$, we know $\langle 0+i|M_{m}^{\dagger}M_{m}|0-i\rangle\langle j|j\rangle=0, i.e., \langle 0|M_{m}^{\dagger}M_{m}|0\rangle-\langle i|M_{m}^{\dagger}M_{m}|i\rangle=0, i=1,\ldots,m-1$. Then, $\langle 0|M_{m}^{\dagger}M_{m}|0\rangle=\langle i|M_{m}^{\dagger}M_{m}|i\rangle, i=1,\ldots,m-1$. Thus, $a_{00}=a_{11}=\cdots=a_{m-1m-1}$.

Therefore, all measurements $M_{m}^{\dagger}M_{m}$ are proportional to the identity. That is to say, Alice cannot start with a nontrivial measurement.

When Bob has to start with the nontrivial and non-disturbing measurements $M_{n}^{\dagger}M_{n}$, we can also get that the post measurement states $\{I_{A}\otimes M_{n}|\phi_{i}\rangle, i=1,\ldots,2n-1\}$ should be mutually orthogonal. In the same way, all of Bob's measurements $M_{n}^{\dagger}M_{n}$
are also proportional to the identity.
That is, Bob cannot start with a nontrivial measurement either.
Therefore, the states (3) cannot be perfectly
distinguished by LOCC.
This completes
the proof.
\end{proof}

From the result, we can see the number of the states (3) is much less than the states in [26]. For example, in $3\otimes4$, the states of our construction has only 7 states, but they presented 12 states. In addition, most of our set are extendible. When $m=n=3$, the states (3) are unextendible. However, in $3\otimes5$, from (3), we can get the following states :
\begin{eqnarray}
\label{eq.3}
\begin{split}
&|e\pm f\rangle=\frac{1}{\sqrt{2}}(|e\rangle \pm |f\rangle),0\leq e<f\leq 4,\\
&|\phi_{1}\rangle=|1\rangle_{A}|0-1\rangle_{B},\\ &|\phi_{2}\rangle=|2\rangle_{A}|0-2\rangle_{B},\\
&|\phi_{3}\rangle=|0-1\rangle_{A}|2\rangle_{B},\\
&|\phi_{4}\rangle=|0-2\rangle_{A}|1\rangle_{B},\\
&|\phi_{5}\rangle=|0-1\rangle_{A}|3\rangle_{B},\\
&|\phi_{6}\rangle=|0-1\rangle_{A}|4\rangle_{B},\\
&|\phi_{7}\rangle=|0+1\rangle_{A}|2-4\rangle_{B},\\
&|\phi_{8}\rangle=|2\rangle_{A}|3-4\rangle_{B},\\
&|\phi_{9}\rangle=|0+1+2\rangle_{A}|0+1+2+3+4\rangle_{B}.\\
\end{split}
\end{eqnarray}

Then, let $|\phi_{10}\rangle=|2\rangle_{A}|0+2-3-4\rangle_{B}$, we can know $|\phi_{10}\rangle$ is orthogonal with the states in (4). Thus, the states (4) are extendible. When $m=n\geq4$, the following product states $|\phi_{2n}\rangle=|2\rangle_{A}(|0+1\rangle-2|3\rangle)_{B}$ is orthogonal to all the product states in (3). Therefore, the class of states is extendible.

\section{Conclusion}
In this paper, we construct two classes of locally indistinguishable orthogonal product states in $m\otimes n(3\leq m \leq n)$, and show the specific structure of the states. Then, we prove that Alice and Bob cannot perform a nontrivial measurement upon respective system. These results extend the phenomenon of nonlocality without entanglement. And we also hope that these results can lead to a better understanding for the phenomenon of nonlocality without entanglement. Although the smallest number of LOCC indistinguishable pure product states remains unknown, we think that the second class of states of our construction is small enough. Finally, we are interested in constructing the smallest number of pure product states which cannot be distinguished by LOCC.

\begin{acknowledgments}
This work is supported by NSFC (Grants No. 61272057 and No. 61572081) and the Beijing Higher Education Young Elite
Teacher Project (Grants No. YETP0475 and No. YETP0477).
\end{acknowledgments}

\nocite{*}

\bibliography{apssamp}

\end{document}